\newtheorem{theorem}{Theorem}
\newtheorem{lemma}[theorem]{Lemma}
\newtheorem{claim}[theorem]{Claim}
\newtheorem{coro}[theorem]{Corollary}
\begin{document}

\title{The complexity of Shortest Common Supersequence for inputs with no identical consecutive letters}

\author{A. Lagoutte\thanks{LIP, Universit\'e de Lyon, ENS Lyon, CNRS UMR5668, INRIA, UCB Lyon 1, France} \and S. Tavenas\footnotemark[1] 
}

\maketitle

\begin{abstract}
The Shortest Common Supersequence problem ({\tt SCS} for short) consists in finding a shortest common supersequence of a finite set of words on a fixed alphabet $\Sigma$. It is well-known that its decision version denoted [SR8] in~\cite{GJ1979} is NP-complete. Many variants have been studied in the literature. In this paper we settle the complexity of two such variants of SCS where inputs do not contain identical consecutive letters. We prove that those variants denoted {\tt $\varphi$SCS} and {\tt MSCS} both have a decision version which remains NP-complete when $|\Sigma| \geq 3$. Note that it was known for {\tt MSCS} when $|\Sigma| \geq 4$ \cite{FW2012} and we discuss how \cite{Darte2000} states a similar result for $|\Sigma| \geq 3$. 
\end{abstract}

\section{Introduction}

Given two words $u$ and $v$ over an alphabet $\Sigma$, $u$ is a supersequence of $v$ if one can find in $u$ a sequence of non-necessarily successive letters that spells $v$. The shortest supersequence of $u$ is obviously $u$, but the problem becomes more difficult if the input is a set of words and one wants to find a common supersequence for these words as short as possible. The decision version of this problem, called \texttt{SCS} for \texttt{Shortest Common Supersequence} has been proven NP-complete in 1981 by R\"aih\"a and Ukkonen \cite{RU1981}, even if the alphabet has size only 2. It is even NP-complete for some very restricted input, such as in the result of Middendorf \cite{Middendorf1994} on which our work deeply relies: the alphabet is $\Sigma_2=\{0,1\}$, and all the input words have the same length and each contains exactly two non-consecutive 1.
However, another variant of \texttt{SCS}, which we will call \texttt{Modified SCS} (\texttt{MSCS} for short), appears very naturally in the study of combinatorial flood-filling games such as \textsc{Flood-It} and \textsc{Honey-Bee} (studied for example in \cite{Bristol2012, FW2012, FloodIt, MS-FUN2012}). In particular in \cite{FloodIt}, the authors show the NP-completeness of {\tt Flood-It}, using a reduction to {\tt MSCS} with an alphabet of size 3.
This variant is stated as follows:
\medskip

{\bf \noindent Modified Shortest Common Supersequence :\\ {\tt MSCS} - decision version}

{\bf\noindent Input:} A set $L=\{w_1, \ldots, w_n\}$ of words on an alphabet $\Sigma=\{a_1,a_2,\ldots,a_d\}$ such that no word $w_i$ contains two consecutive identical letters and no word $w_i$ starts with letter $a_1$, and an integer $k$.

{\bf \noindent Output:} Does there exist a supersequence of $L$ of size less than $k$?

\bigskip

At first sight, it can seem easy to reduce {\tt SCS} with $\Sigma_2=\{0,1\}$ to {\tt MSCS} with $\Sigma_3=\{0,1,2\}$ by replacing every occurence of $0$ by $02$ in every word of the input set $L$, and doing the reverse operation on the solution to {\tt MSCS} to get the shortest common supersequence of $L$. Unfortunately, this very natural idea does not work in general as shown on the following counter-example. Let $L=\{00111, 11100\}$ be a input for {\tt SCS} (in the minimization version), then the corresponding input for {\tt MSCS} is $L'=\{0202111, 1110202\}$. The shortest solution for {\tt MSCS} is 1110202111 of length 10, and its corresponding candidate solution for {\tt SCS} is 11100111 of length 8. However, the shortest solution for $L$ has size 7: 0011100. The problem here is that the operation transforms 0 into a double-counting letter and looses the symmetry between the two letters. 

The second idea that occurs to mind is then to transform every occurrence of 0 by 02, and also every occurrence of 1 by 12. Then one can hope solving the newly created instance of {\tt MSCS}, and delete every 2 from the solution of {\tt MSCS} to get the shortest solution to {\tt SCS}. This does not work either: consider the instance of {\tt SCS} (in its minimization version) with $L=(\Sigma_2)^3\setminus \{111\}$, that is to say that $L$ contains every word of length 3 on $\Sigma_2=\{0,1\}$ except 111. The shortest supersequence for $L$ is 01010 and is unique. Let $L'$ be the set of words obtained from $L$ by replacing every occurrence of 0 by 02 and every occurrence of 1 by 12. There is no supersequence for $L'$ of length 9 obtained from 01010 by inserting some 2's (there is one of length 10: 0212021202). However there does exist a shortest supersequence for $L'$ of length 9, namely 012012012. Consequently, the very natural ideas do not work for reducing {\tt SCS} to {\tt MSCS}.

Note that Fleischer and Woeginger designed in \cite{FW2012} a reduction proving that {\tt MSCS} is NP-complete when $|\Sigma| \geq 4$ (the conference version of the paper states the result for $|\Sigma|\geq 3$, but the very simple proof turned out to be false; the correct statement appears in the later-published journal version). One should also mention Darte's work \cite{Darte2000} which does not focus directly on {\tt MSCS}, but states a result about typed fusions for typed directed graphs in a compilation context. However, as he explains at the beginning of Section~3.5, when the directed graphs are disjoint union of chains, his problem is equivalent to {\tt SCS}. Moreover the conditions over its typed fusions and digraphs implies that the {\tt SCS} inputs equivalent to his digraph inputs, are words with no identical consecutive letters. Thus Proposition~3 in~\cite{Darte2000} can be interpreted as the fact that {\tt SCS} for inputs with no identical consecutive letters is NP-complete. His reduction from Vertex Cover uses the alphabet $\Sigma=\{0,1,\bar{a}\}$ and a careful look shows that he generates {\tt SCS} inputs where no word starts with $\bar{a}$. Consequently one could state that the NP-completeness of {\tt MSCS} for three letters is shown there. His reduction is derived from a paper of R\"aih\"a and Ukkonen \cite{RU1981} as well as its proof. Unfortunately it is 10 pages long and hard to check.   

The main purpose of our work is to provide a new NP-completeness reduction for {\tt MSCS} when $|\Sigma| \geq 3$, with a shorter proof, so that the result becomes undisputed. To this end, we introduce yet another variant of $\texttt{SCS}$, called {\tt $\varphi$SCS}.
We first define the alphabets $\Sigma_2=\{0,1\}$ and $\Sigma_3=\{0,1,2\}$, and the word morphism $\varphi: \Sigma_2^* \rightarrow \Sigma_3^*$ by $\varphi(0)=0202$ and $\varphi(1)=1$.

\medskip

{\bf \noindent Shortest Common Supersequence for some inputs generated by $\varphi$ : \\ {\tt $\varphi$SCS} - decision version}
\newline {\bf Input:} A set $L=\{w_1, \ldots, w_n\}$ of words on the alphabet $\Sigma_3$ such that $L\subseteq \varphi(\Sigma^*)$, each $w_i$ contains exactly two ones, which moreover are non consecutive, and an integer $k$.

{\bf \noindent Output:} Does there exist a supersequence of $L$ of size less than $k$?

\bigskip

A careful look at those two problems shows that {\tt $\varphi$SCS} is a particular case of {\tt MSCS} if $|\Sigma| \geq 3$. The input words for {\tt $\varphi$SCS} are a concatenation of patterns 0202 and 1 with no consecutive ones, thus they do not contain consecutive identical letters. Moreover none of those input words starts with letter 2. Up to relabelling the letters, one may consider that $a_1=2$. Consequently, we will show that {\tt $\varphi$SCS} is NP-complete, which implies that {\tt MSCS} is also NP-complete if $|\Sigma| \geq 3$. One may wonder why we use the block 0202 instead of the more natural block 02 (that is to say, why $\varphi(0)=0202$ and not $02$). The key reason appears in the third item of Lemma \ref{SCS special} : the elementary technique we use to prove it does not work for the case of blocks 02.

Besides, observe that the threshold on $|\Sigma|$ which involves NP-hardness is tight: when $|\Sigma|=2$, {\tt MSCS} is trivially polynomial.
Finally, let us notice that our proof is a very close adaptation of the proof of Middendorf's result \cite[Theorem 4.2]{Middendorf1994} mentioned in the first paragraph.

\paragraph{Notation} Given two words over an alphabet $\Sigma$, $u=u_1 \ldots u_p$ ($u_i \in \Sigma$) and $v=v_1 \ldots v_q$ ($v_i \in \Sigma$), an {\em embedding} of $u$ into $v$ is an injection $f$ from $\{1,\ldots,p\}$ into $\{1,\ldots,q\}$ such that $u_i=v_{f(i)}$. It tells that $v$ is a \emph{supersequence} of $u$ and we also say that $f$ maps letters of $u$ onto letters of $v$. We will also use equivalently the terms {\em pattern}, {\em block} or {\em factor} to designate a sequence of consecutive letters in a word. A supersequence {\em for a set of words} is a word which is a supersequence for each of those words.

\section{Result}

The NP-completeness reduction will start from Vertex Cover, but we will need the next two lemmas.

\begin{lemma}\label{lemma: 0202}
Let $L$ be a set of words over $\Sigma_3$, such that $L\subseteq \varphi(\Sigma^*)$, and $S=s_1\ldots s_l$ be a supersequence of $L$. Then there exists a supersequence $S'$ of $L$ of size $\leq |S|$ such that $S'\subseteq\varphi(\Sigma^*)$.
\end{lemma}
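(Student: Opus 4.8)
The plan is to take for $S'$ a common supersequence of $L$ of length at most $|S|$ that is minimal, first with respect to its length and then with respect to a secondary potential, and to show that this minimality forces $S'\in\varphi(\Sigma^*)$. I will use the following elementary description of the target set: a word over $\Sigma_3$ lies in $\varphi(\Sigma^*)$ if and only if it has no two equal consecutive letters and every maximal factor using only the letters $0$ and $2$ equals $(0202)^k$ for some $k\ge 0$, i.e.\ alternates $0,2,0,2,\dots$, starts with $0$, ends with $2$, and has length divisible by $4$. Two structural facts about the inputs will drive every step: in any word of $\varphi(\Sigma^*)$ each occurrence of $2$ is immediately preceded by a $0$ and each occurrence of $0$ is immediately followed by a $2$; in particular such a word starts with $0$ or $1$ and ends with $2$ or $1$.

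First I would record the deletion engine that handles equal consecutive letters. If $s_p=s_{p+1}$, then for every $w_i$ and every embedding of $w_i$ into $S$ at most one letter of $w_i$ is sent into $\{p,p+1\}$: two letters sent to these adjacent positions would have to be consecutive in $w_i$ and carry the same letter, which is impossible since no input word has two equal consecutive letters. Hence deleting one of the two copies and re-routing each embedding to the surviving copy still yields a common supersequence, now strictly shorter. Consequently a length-minimal common supersequence has no two equal consecutive letters, so each of its maximal $\{0,2\}$-factors already alternates; it remains only to correct the \emph{phase} (factors that begin with $2$ or end with $0$) and the \emph{parity} (alternating factors carrying an odd number of $02$-pairs).

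Next I would dispose of the boundary violations. A leading $2$ of $S'$ can never be the image of a letter of any $w_i$, since an image $2$ must have a preceding image $0$; hence that $2$ is deletable, contradicting length-minimality, and symmetrically for a trailing $0$. For an interior factor that starts with $2$, this $2$ sits just after a $1$ of $S'$; using the fact that every image $2$ is preceded by an image $0$ lying in an earlier factor, I would re-route the embeddings so that this $2$ carries no letter and can be removed (or, failing a length gain, so that the secondary potential strictly drops), and dually for an interior factor ending with $0$ via the fact that every image $0$ is followed by an image $2$.

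The crux --- and the step I expect to be the main obstacle --- is the parity. This is exactly where the block $0202$, rather than $02$, is used: because every input block contributes an \emph{even} number of $02$-pairs and the $1$'s of an input word only occur between blocks, the pairs consumed by any word inside a fixed stretch of $S'$ delimited by two consecutive $1$'s of $S'$ come in even number. An alternating factor of $S'$ bearing an \emph{odd} number of pairs must therefore contain a $02$ that is redundant for every embedding simultaneously, and deleting it keeps a common supersequence while strictly shortening it. The delicate point to control is that an input $0202$-block may embed across $1$-letters of $S'$ that it does not use, so the re-routing must be carried out for all words at once; verifying that the evenness of $0202$ makes this possible is the heart of the argument. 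Once no violation survives, $S'$ satisfies the description above, so $S'\in\varphi(\Sigma^*)$ with $|S'|\le|S|$, as required.
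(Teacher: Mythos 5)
Your first step (the deletion engine for $00$ and $22$, and the treatment of boundary/phase defects) is in the spirit of the paper's first step, but the step you yourself identify as the crux --- parity --- is wrong as committed to, and the error is structural, not a missing detail. Take $L=\{\varphi(1),\varphi(0)\}=\{1,0202\}$ and $S'=02102$. This is a length-minimal common supersequence (any supersequence must contain $0202$, which has no $1$, plus a $1$, hence length $\geq 5$), it has no equal consecutive letters, and both maximal $\{0,2\}$-factors alternate, start with $0$ and end with $2$; yet each carries an odd number of $02$-pairs, so $S'\notin\varphi(\Sigma^*)$. Since $S'$ is length-minimal, \emph{no} letter of it is deletable, so in particular no $02$ is ``redundant for every embedding simultaneously'': your parity claim fails outright. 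Your supporting premise fails too: the word $0202$ consumes exactly one pair in each stretch of $S'$, because an input block $\varphi(0)$ can straddle a $1$ of $S'$ --- exactly the difficulty you flag and then argue away. More generally your framework is self-defeating at this point: once $S'$ is length-minimal, every surviving violation must by definition be curable only by a \emph{length-preserving} move, governed by the secondary potential, which you never define. This is precisely what the paper does: its parity correction is the length-preserving transfer $1(02)^{2k+1}1 \to 1(02)^{2k}1\,02$, pushing the odd pair rightward across a $1$ (justified by a case analysis on embeddings that uses the evenness of $\varphi(0)=0202$), and an actual shortening $1(02)^{2k+1}\to 1(02)^{2k}$ occurs only when an odd run reaches the right end. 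In the example the cure is the shift $02102\to 02021\in\varphi(\Sigma^*)$; no deletion exists.

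Two smaller defects point the same way. Your ``if and only if'' characterization is false: $\varphi(11)=11\in\varphi(\Sigma^*)$ has equal consecutive letters, and correspondingly the justification of your deletion engine (``no input word has two equal consecutive letters'') fails for the letter $1$ --- with $L=\{11\}$ the minimal supersequence $11$ itself has equal adjacent letters. This is repairable (forbid only $00$ and $22$ and allow runs of $1$'s, as the paper implicitly does), but it must be repaired, since your target description as stated is unreachable. Likewise your phase step suffers the same flaw as the parity step: on a length-minimal word such as $01202$ for $L=\{1,0202\}$ nothing is deletable, and the correct move is the paper's length-preserving swap $01\to 10$, $12\to 21$ (valid because neither $01$ nor $12$ occurs as a factor of any word of $\varphi(\Sigma^*)$, so no single embedding uses both swapped positions); you gesture at a potential drop here but never define the potential or prove the swap lemma. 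Net assessment: the first-step material is recoverable, but the heart of the lemma --- the parity correction, which is the stated reason for choosing $0202$ over $02$ --- rests on a claim that is false, and repairing it requires importing the paper's shifting argument that your deletion-based plan was meant to replace.
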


\begin{proof}
\begin{itemize}
\item \emph{First step:} Let $S'$ be a supersequence of $L$ and let $S''$ be the string obtained from $S'$ after applying one of the following operations:
\begin{enumerate}
\item If $S'$ ends by $0$, delete it.
\item If $S'$ starts by $2$, delete it
\item If $S'$ contains $00$, replace it by $0$.
\item If $S'$ contains $22$, replace it by $2$.
\item If $S'$ contains $01$, replace it by $10$.
\item If $S'$ contains $12$, replace it by $21$.
\end{enumerate}
Then $S''$ is still a supersequence of $L$: indeed, item (i) and (ii) are obvious since no word of $L$ starts by 2 nor ends by 0. For item (iii), observe that no embedding can map two 0 onto two consecutive 0, since no word contains two consecutive 0. Thus if $S'$ contains $00$ at index $i$, and $f$ is an embedding of $w\in L$ so that $f$ maps a zero of $w$ onto $s_{i+1}$, we can modify $f$ to map this zero onto $s_i$. Then $s_{i+1}$ is useless and we can delete it. The same argument applies for item (iv). For item (v), observe that no embedding can map a 0 and a 1 onto two consecutive 0 and 1 because this pattern does not appear in any word of $L$. Thus if $S'$ contains $01$ at index $i$, and $f$ is an embedding of $w\in L$ so that $f$ maps a zero of $w$ onto $s_{i}$ (resp. a one of $w$ onto $s_{i+1}$), we can swap the 0 and the 1 in $S$ and modify $f$ to map the zero of $w$ onto $s_{i+1}$ (resp. the one  of $w$ onto $s_i$). The same argument applies for item (vi).

Consequently, starting from $S$, we can iterately "push" the zeros from left to right by deletion (transformation 00 into 0) or switching (01 into 10), and delete the last letter if it is a zero, until getting a supersequence $S_1$ where each $0$ is followed by a $2$. In the same manner, starting from $S_1$, we can iterately "push" the 2's from right to left until getting a supersequence $S_2$ where each two is preceded by a zero. In other words, $S_2$ is formed by blocks of $02$ and blocks of $1$. Observe that for such supersequences and for every $w\in L$, there always exists an embedding $f$ of $w\in L$ such that for each block $02$, either $f$ maps two consecutive letters to this block or $f$ maps no letter to this block. We will focus only on this type of embedding in the following. Observe moreover that $|S_2|\leq |S|$. 

\item \emph{Second step:} The goal is to build a supersequence $S_3$ formed by blocks of $0202$ and blocks of $1$. 
Suppose first that $S_2$ starts by $(02)^{2k'}1$ for some $k'\in \mathbb{N}$. Consider the first apparition $s_i\ldots s_{i+2k+2}$, $i\in [0:|S_2|]$ of a pattern $1(02)^{2k+1}1$ for any $k\in \mathbb{N}$ and call $2j$ the number of blocks of $02$ before the pattern. Let $S'$ be the string obtained from $S_2$ by replacing this pattern by $1(02)^{2k}102$. Then $S'$ is a supersequence of each $w\in L$: let $f$ be an embedding of $w$ in $S_2$. Either $f$ does not map any letter to $s_{i+2k+2}=1$, or $f$ uses at most $2k$ blocks of $02$ between $s_i=1$ and $s_{i+2k+2}=1$, or there exists a block of $02$ among the $2j$th first blocks such that $f$ maps no letter to this block and $f$ maps no 1 between this block of $02$ and $s_{i+1}$. Otherwise, $w\notin \varphi(\Sigma^*)$. In each one of the three cases, we can easily modify $f$ so that $S'$ is a supersequence of $w$. We can iterate the process until no odd block of $02$ is found. Finally, if $S'$ ends with a pattern $1(02)^{2k+1}$, we can replace it by $1(02)^{2k}$ and still have a supersequence: if $f$ is an embedding of $w\in L$, either $f$ uses only $2k$ blocks among these $2k+1$, or there exists a block of $02$ in $S'$ before the 1 which is not used by $f$ and such that $f$ maps no one after this block. Thus we can modify $f$ as in the previous arguments. The last case if when $S_2$ starts with $(02)^{2k'+1}1$: we can replace this pattern at the very first step by $(02)^{2k'}1{02}$ by the same arguments. Thus we obtain a supersequence $S_3$ of size $\leq |S|$ such that $S_3\in \varphi(\Sigma^*)$.
\end{itemize}
\end{proof}

\begin{lemma}\label{SCS special}
Let $n$ be a positive even integer, $L=\{S_0, \ldots, S_{n^2}\}$ be a set of strings with $S_i=(0202)^{i}1(0202)^{n^2-i}$ for $i\in [0:n^2]$. Then let $S$ be a supersequence of $L$ such that $S\in \varphi(\Sigma^*)$:
\begin{itemize}
\item If $S$ contains exactly $k$ ones, then $S$ contains at least $\lceil(n^2+1)/k\rceil-1+n^2$ blocks of $0202$.
\item If $S$ contains exactly $n^2-1+k$ blocks of $0202$, then $S$ contains at least $\lceil(n^2+1)/k\rceil$ ones.
\item The string $S_{\min}= 1((02)^n1)^{2n}(02)^{n-2}$ is a shortest supersequence of $L$. It has length $4n^2+4n-3$.
\end{itemize}
\end{lemma}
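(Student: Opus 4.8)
The plan is to reason throughout with a supersequence $S$ lying in $\varphi(\Sigma^*)$; for the first two items this is assumed, and for the third it is harmless by Lemma~\ref{lemma: 0202}, which lets me replace an arbitrary shortest supersequence by one in $\varphi(\Sigma^*)$ of no greater length. Write $S=\varphi(u)$ with $u\in\Sigma_2^*$, let $k$ be the number of ones of $S$ and $B$ its number of blocks $0202$, so $|S|=4B+k$, and for $t\in\{1,\dots,k\}$ let $p_t$ be the number of blocks $0202$ preceding the $t$-th one of $S$. The key step is to characterize when $S$ is a supersequence of a single $S_i$. In any embedding of $S_i=(0202)^i 1 (0202)^{n^2-i}$ into $S$, the unique $1$ of $S_i$ lands on some $1$ of $S$, say the $t$-th; then $(0202)^i$ must embed into the portion of $S$ before that $1$ and $(0202)^{n^2-i}$ into the portion after. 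Since $(0202)^i$ has no $1$, its embedding uses none of the ones of $S$ and hence only sees the $0$'s and $2$'s of the relevant portion, which spell $(0202)^{p_t}$; as $(0202)^a$ embeds into $(0202)^b$ exactly when $a\le b$, this succeeds iff $i\le p_t$, and symmetrically the suffix fits iff $n^2-i\le B-p_t$. Therefore $S$ is a supersequence of $S_i$ iff $i$ belongs to one of the $k$ integer intervals $I_t=[\,n^2-B+p_t,\ p_t\,]$, each containing at most $B-n^2+1$ integers.

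Consequently $S$ is a supersequence of all of $L$ iff $I_1,\dots,I_k$ cover $\{0,1,\dots,n^2\}$, and a pure counting argument gives the master inequality
\[
n^2+1\ \le\ \Big|\bigcup_{t} I_t\Big|\ \le\ \sum_{t}|I_t|\ \le\ k\,(B-n^2+1).
\]
The first item is immediate: dividing by $k$ and using that $B-n^2+1$ is an integer yields $B-n^2+1\ge\lceil(n^2+1)/k\rceil$. For the second item the hypothesis makes the common interval length $B-n^2+1$ equal to the prescribed offset, so the master inequality states precisely that the number of ones times that offset is at least $n^2+1$, i.e. the number of ones is at least $\lceil(n^2+1)/k\rceil$.

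For the third item I would first verify the easy direction: using that $n$ is even, $(02)^n$ and $(02)^{n-2}$ split into blocks $0202$, so $S_{\min}$ lies in $\varphi(\Sigma^*)$, has $k=2n+1$ ones and $B=n^2+(n-2)/2$ blocks, hence length $4B+k=4n^2+4n-3$; moreover its ones satisfy $p_t=(t-1)\,n/2$, for which the intervals $I_t$ (of common length $c=n/2$) cover $\{0,\dots,n^2\}$, so $S_{\min}$ is indeed a supersequence of $L$. For the matching lower bound, set $c=B-n^2+1\ge1$, so that $|S|=4B+k=4n^2-4+(4c+k)$ while the master inequality reads $kc\ge n^2+1$. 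By the arithmetic--geometric mean inequality $4c+k\ge 2\sqrt{4ck}=4\sqrt{ck}\ge4\sqrt{n^2+1}>4n$, and since $4c+k$ is an integer this forces $4c+k\ge 4n+1$, whence $|S|\ge 4n^2+4n-3$. Together with the construction this shows $S_{\min}$ is shortest.

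The step I expect to be delicate is the embedding characterization: one must justify that, when embedding a block-word $(0202)^i$ into a prefix of $S$, the interspersed ones may be ignored and the matching may be taken to respect the $0202$ blocks, so that merely comparing numbers of blocks is legitimate. This is exactly where the normal form $S\in\varphi(\Sigma^*)$ and the observation from the proof of Lemma~\ref{lemma: 0202} that each block $0202$ can be assumed entirely used or entirely unused are needed; with that in hand the first two items are immediate and the third reduces to the short optimization above.
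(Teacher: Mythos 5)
Your proof is correct, and while it rests on the same two underlying facts as the paper (the unique one of $S_i$ must land on some one of $S$, and block counts on either side must suffice), you package them genuinely differently. The paper proves items (i) and (ii) by two separate pigeonhole arguments: for (i) it finds $\lceil(n^2+1)/k\rceil$ strings whose ones share a single one of $S$ and bounds blocks via $i_{\max}$ and $i_{\min}$; for (ii) it fixes a one with $j$ blocks before it and shows at most $k$ strings can use it. You instead first establish an exact characterization --- $S$ is a supersequence of $S_i$ iff $i$ lies in one of the $k$ intervals $I_t=[n^2-B+p_t,\,p_t]$, each of cardinality $B-n^2+1$ --- so that covering $L$ becomes covering $\{0,\dots,n^2\}$ by these intervals, and the single master inequality $k(B-n^2+1)\geq n^2+1$ delivers both items at once (your justification of the characterization is sound: the ones of $S$ are unusable when embedding $(0202)^i$, and since $S\in\varphi(\Sigma^*)$ the remaining letters before the $t$-th one spell exactly $(0202)^{p_t}$, reducing to the greedy fact that $(0202)^a$ embeds in $(0202)^b$ iff $a\leq b$; the fuller ``block entirely used or unused'' observation from Lemma~\ref{lemma: 0202} is not even needed). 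For item (iii) the paper verifies $S_{\min}$ is a supersequence by direct index arithmetic ($(j-1)n/2+1\leq i\leq jn/2$, etc.) and obtains the lower bound by minimizing $f(x)=x+4(n^2+1)/x-4+4n^2$ over $\mathbb{R}$; you reuse your characterization for the verification (the intervals with $p_t=(t-1)n/2$, $c=n/2$ tile $[0:n^2]$ exactly) and replace the function minimization by AM--GM, $4c+k\geq 4\sqrt{ck}\geq 4\sqrt{n^2+1}>4n$, then round up by integrality --- the same optimization, slightly more elementary. The trade-off: your iff-characterization is a modest strengthening that unifies all three items and makes the extremality of $S_{\min}$ (a perfect tiling) transparent, at the cost of a page of setup; the paper's per-item pigeonhole arguments are more local and need no structural claim about all embeddings. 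One pedantic point you silently pass over: coverage forces $B\geq n^2$ (otherwise the intervals are empty), which is what makes ``each interval has at most $B-n^2+1$ integers'' meaningful --- worth a clause, but not a gap.
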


\begin{proof}

\begin{itemize}
\item Let $S$ containing $k$ ones. There must be a subset $L'$ of $L$ which contains at least $\lceil(n^2+1)/k\rceil$ strings such that the strings in $L'$ can be embedded in $S$ in such a way that the ones in these strings are mapped onto the same one of $S$. 
Let $i_{\max}=\max\{i|S_i \in L'\}$ and $i_{\min}=\min\{i|S_i \in L'\}$. Since $S_{i_{\min}}$ and $S_{i_{\max}}$ are mapped onto the same one, $S$ must contain at least $i_{\max}+n^2-i_{\min}$ zeros. 
Moreover, $i_{\max}\geq i_{\min}+\lceil(n^2+1)/k\rceil-1$, so $S$ contains at least $\lceil(n^2+1)/k\rceil-1+n^2$ blocks of $0202$.

\item Let $S$ containing $n^2-1+k$ blocks of $0202$. Consider a one in $S$ and let $L'$ be the subset of $L$ such that the one in the strings of $L'$ is mapped onto this one. 
Let $j$ be the number of blocks of $0202$ before this one in $S$. Then $S_i\in L'$ only if $i\leq j$ and $n^2-i\leq n^2-1+k-j$, i.e. only if $j+1-k\leq i\leq j$. 
Let $i_{\max}=\max\{i|S_i \in L'\}$ and $i_{\min}=\min\{i|S_i \in L'\}$. 
Then $|L'|\leq i_{\max}-i_{\min}+1\leq j-j-1+k+1\leq k$. 
At most $k$ strings are mapped onto the same one, thus there are at least $\lceil (n^2+1)/k\rceil$ ones.
\item $S_{\min}$ is indeed a supersequence of $L$: first, it is a supersequence of $S_0$. Secondly, if $i\neq 0$, there exists $j\in [1:2n]$ such that $(j-1)n/2+1\leq i\leq jn/2$. Then the one in $S_i$ can be mapped to the $(j+1)$th one, and there is $jn/2\geq i$ blocks of $0202$ before the one, and $(2n-j)n/2+n/2-1$ blocks of $0202$ after the one, which is enough to map the suffix $(0202)^{n^2-i}$ because $(2n-j)n/2+n/2-1\geq n^2-((j-1)n/2+1)\geq n^2-i$. So $S_{\min}$ is indeed a supersequence of $L$.

Let $S'$ be the shortest supersequence of $L$. By Lemma \ref{lemma: 0202}, $S'\in \varphi(\Sigma^*)$, so we can apply (i): if $k$ is the number of ones of $S'$, then $|S'|\geq k+4\lceil(n^2+1)/k\rceil -4 +4n^2\geq f(k)$ where $f$ is the function defined on $\mathbb{R}$ by $f(x)=x+4(n^2+1)/x -4 +4n^2$. However, $f$ admits a minimum on $\mathbb{R}$ which is $f(2\sqrt{n^2+1})=4\sqrt{n^2+1}+4n^2-4>4n+4n^2-4$. Consequently,  $|S'|\geq f(k)>4n+4n^2-4$. Since $|S'|$ is an integer, $|S'|\geq 4n+4n^2-3=|S|$.
\end{itemize}
\end{proof}

\begin{theorem}
{\tt $\varphi$SCS} is NP-complete.
\end{theorem}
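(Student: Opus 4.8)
The plan is first to note that $\varphi$SCS lies in NP: a common supersequence of length less than $k$ has size polynomial in the input, and checking that a candidate word embeds each $w_i$, lies in $\varphi(\Sigma^*)$, and has exactly two non-consecutive ones is routine and polynomial. The real work is an NP-hardness reduction, which I would take from Vertex Cover, following as closely as possible Middendorf's reduction for \texttt{SCS} restricted to binary words with two non-consecutive ones \cite[Theorem 4.2]{Middendorf1994}.

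Concretely, given a graph $G$ and a target $c$, I would produce Middendorf's binary instance $L_0$ (words over $\Sigma_2$, each carrying exactly two non-consecutive $1$'s) with its threshold, and then feed the $\varphi$-images $\varphi(L_0)$ into $\varphi$SCS. This makes the construction automatically legal: each $\varphi(w)$ is a concatenation of blocks $0202$ and single $1$'s, hence lies in $\varphi(\Sigma^*)$ and inherits exactly two non-consecutive $1$'s. The instance splits into two roles: frame words that pin the global shape, whose combinatorial effect is exactly the rigidity analyzed through the strings $S_i$ in Lemma \ref{SCS special}, and edge words encoding the edges of $G$. Choosing $n$ polynomially in $|V|$ (and even, as the lemma requires, padding $G$ if necessary) and transporting Middendorf's threshold through $\varphi$ yields the integer $k$.

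For correctness I would argue both directions around a normalized optimum. By Lemma \ref{lemma: 0202}, any supersequence of $\varphi(L_0)$ may be assumed, without increasing its length, to lie in $\varphi(\Sigma^*)$; such a supersequence is itself the $\varphi$-image of a binary word, so its shape is fully determined by its number of ones and its number of $0202$-blocks. Lemma \ref{SCS special} is then the counting engine: items (i) and (ii) give the exact trade-off between ones and blocks forced by the frame, and item (iii) identifies $S_{\min}$ as the cheapest way to satisfy the frame, with its explicit length $4n^2+4n-3$. Overlaying the edge words, the forward direction builds from a vertex cover of size $\le c$ a supersequence meeting the threshold by realizing the covered vertices, while the backward direction reads a cover of size $\le c$ off the one-slots of a short — hence, by the rigidity, near-$S_{\min}$ — supersequence.

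The main obstacle is precisely this backward direction, and it is where the two lemmas earn their keep. The examples in the introduction show that naive block substitutions do not preserve shortest-supersequence length in general, so the argument must exploit that the frame of $L_0$ is special: its $\varphi$-image has a shortest supersequence whose length is pinned exactly by Lemma \ref{SCS special}(iii), and — as the remark after the problem statement flags — this exact pinning relies on the block being $0202$ rather than $02$. I therefore expect the delicate points to be (a) verifying that the quantitative bounds of Lemma \ref{SCS special} stay tight once the edge words are present, so that a supersequence below the threshold cannot cheat by trading ones against blocks, and (b) setting $k$ so that the integrality in the optimization of item (iii) lands exactly on the boundary separating covers of size $\le c$ from those of size $>c$.
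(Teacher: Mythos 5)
There is a genuine gap at the core of your plan. You propose to take Middendorf's binary instance $L_0$ verbatim, feed $\varphi(L_0)$ to {\tt $\varphi$SCS}, and obtain the new bound by ``transporting Middendorf's threshold through $\varphi$.'' But $\varphi$ does not preserve shortest-supersequence lengths in any transportable way: since $|\varphi(w)| = 4\cdot(\#\text{zeros of } w) + (\#\text{ones of } w)$, passing to $\varphi$-images (after the normalization of Lemma \ref{lemma: 0202}) replaces the uniform objective $z+o$ by the weighted objective $4z+o$ over the same feasible set, and the minimizers of these two objectives differ. You can see this quantitatively inside Lemma \ref{SCS special} itself: for the frame family, the binary optimum uses roughly $\sqrt{n^2+1}\approx n$ ones, whereas the $\varphi$-world optimum $S_{\min}$ uses $2\sqrt{n^2+1}\approx 2n$ ones --- the trade-off between ones and blocks shifts, so no formula in Middendorf's threshold alone yields the correct $k$, and the backward direction cannot be inherited from Middendorf's theorem as a black box. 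This is exactly the failure mode the paper's introduction exhibits with its two counterexamples for the substitutions $0\mapsto 02$ and $\{0\mapsto 02,\ 1\mapsto 12\}$; the phenomenon is not avoided by the block $0202$, it is merely controlled by it. Accordingly, the paper does not apply $\varphi$ to Middendorf's instance: it redoes the reduction from Vertex Cover directly in the $\varphi$-world, with constants re-engineered for the $4{:}1$ weighting (half-slots of $3n$ blocks, slots of $6n$, frame of $36n^2$ blocks, threshold $168n^2+37n-3+k$), precisely so that the slack inequalities in the counting arguments (e.g.\ $168n^2+48n-3 > 168n^2+37n-3+k$) survive. It adapts Middendorf's \emph{proof}, not his instance and threshold.

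A second, related inaccuracy: you assert that a normalized supersequence's ``shape is fully determined by its number of ones and its number of $0202$-blocks,'' and your backward direction leans only on the counting items (i)--(ii) of Lemma \ref{SCS special}. Counts alone do not suffice: the paper's backward direction splits $S=S'S''$ at exactly $6n^2+3n$ blocks and proves two \emph{positional} claims --- that $S'$ must contain a one in each window between the $(6n(i-1)+3n)$th and $(6in)$th blocks, and that an edge word $T_l$ forces a one in one of the two half-windows corresponding to its endpoints --- from which the vertex cover is read off. Your sketch gestures at ``reading the cover off the one-slots,'' but without the positional claims (and the parameter choices that make their arithmetic work) the argument does not close; fixing this essentially means reconstructing the paper's reduction rather than transporting Middendorf's.
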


\begin{proof}
Obviously, {\tt $\varphi$SCS} is in NP. We reduce the Vertex Cover problem to it. Let $G=(V,E)$ be a graph with vertices $V=\{v_1, \ldots, v_n\}$ and edge set $E=\{e_1, \ldots e_m\}$ and an integer $k$ be an instance of Vertex Cover. Recall that the Vertex Cover problem asks whether $G$ has a vertex cover of size $\leq k$, i.e. a subset $V'\subseteq V$ with $|V'|\leq k$ such that for each edge $v_iv_j\in E$, at least one of $v_i$ and $v_j$ is in $V'$. Let us now construct our instance of {\tt $\varphi$SCS}:

For all $i\in [1:n], j\in [0:36n^2]$, let \newline $A_i=(0202)^{6n(i-1)+3n}1(0202)^{6n(n+1-i)}$, \newline $B_j=(0202)^j1(0202)^{36n^2-j}$, \newline
$X_i^j=A_iB_j$.

For each edge $e_l=v_i v_j\in E$, $i<j$, let \newline
$T_l=(0202)^{6n(i-1)}1(0202)^{6n(j-i-1)+3n}1(0202)^{6n(n+2-j)}(0202)^{36n^2-1}$.

Now let $L=\{X_i^j|i\in [1:n], j\in [0:36n^2]\}\cup \{T_l|l\in [1:m]\}$. Clearly, $L$ can be constructed in polynomial time, each string in $L$ is in $\varphi(\Sigma^*)$ and has exactly two ones, which are non consecutive. We will now show that $L$ has a supersequence of length $\leq 168 n^2 +37n-3+k$ if and only if $G$ has a vertex cover $V'$ of size $\leq k$.

Suppose $V'=\{v_{i_1}, \ldots , v_{i_k}\}$ is a vertex cover of $G$. Define \newline
$S'=((02)^{6n}1(02)^{6n})^{i_1-1}1 ((02)^{6n}1(02)^{6n})^{i_2-i_1} 1 \ldots ((02)^{6n}1(02)^{6n})^{i_k-i_{k-1}} 1 ((02)^{6n}1(02)^{6n})^{n+1-i_k} (02)^{6n}$, \newline
$S''=1((02)^{6n}1)^{12n}(02)^{6n-2}$, \newline
$S=S'S''$, then $|S|= 168n^2+37n-3+k$.

By Lemma \ref{SCS special}, $S''$ is a supersequence of $\{B_j|j\in [0:36n^2]\}$. Moreover, $((02)^{6n}1(02)^{6n})^n(02)^{6n}$ is a supersequence of  $\{A_i|i\in [1:n]\}$ thus $S'$ also is. From this we deduce that $S$ is a supersequence of $\{X_i^j|i\in [1:n], j\in [0:36n^2]\}$.

Finally, let us prove that $S$ is a supersequence of $T_l$ for $l\in [1:m]$. Let $e_l=v_iv_j$, $i<j$, and consider the two following cases:
\begin{itemize}
\item \emph{Case 1}: $v_i\in V'$, i.e. there exists $t\in [1:k]$ such that $i=i_{t}$. 
The suffixe $(0202)^{36n^2-1}$ of $T_l$ can be embedded in $S''$. The goal is to prove that the prefix $P_l=(0202)^{6n(i-1)}1(0202)^{6n(j-i-1)+3n}1(0202)^{6n(n+2-j)}$ can be embedded in $S'$.
Observe that one can obtain the following subsequence $S'_l$ of $S'$ by deleting a few ones:  $S'_l=((02)^{6n}1(02)^{6n})^{i_{t}-1} 1 ((02)^{6n}1(02)^{6n})^{n+1-i_{t_1}}(02)^{6n}$. Now $S'_l$ can be rewritten

 $S'_l=((02)^{6n}1(02)^{6n})^{i-1} \underline{1} ((02)^{6n}1(02)^{6n})^{j-i-1}(02)^{6n}\underline{1}(02)^{6n}((02)^{6n}1(02)^{6n})^{n+1-j}(02)^{6n}$. Now we can embed the prefix $P_l$ in $S'_l$ by mapping its two ones onto the two underlined ones of $S'_l$ and checking that the number of blocks of $0202$ is enough.

\item \emph{Case 2}: $v_j\in V'$. The suffixe $(0202)^{3n}(0202)^{36n^2-1}$ of $T_l$ can be embedded in $S''$. We can prove similarly to Case 1 that the prefix $P_l=(0202)^{6n(i-1)}1(0202)^{6n(j-i-1)+3n}1(0202)^{6n(n+1-j)+3n}$ can be embedded in $S'$.

\end{itemize}

Finally, $S$ is a supersequence of $L$ of size $168n^2+37n-3+k$.

Suppose now that $L$ has a supersequence of length $\leq 168n^2+37n-3+k$. By Lemma \ref{lemma: 0202}, $L$ has a supersequence $S$ of size $\leq 168n^2+37n-3+k$ such that $S\in \varphi(\Sigma^*)$. Define $S'$ and $S''$ such that $S=S'S''$, where $S'$ is the shortest prefix of $S$ that contains exactly $6n^2+3n$ blocks of $0202$. Since each $A_i$ contains $6n^2+3n$ blocks of $0202$, like $S'$, and $S$ is a supersequence of $X_i^j$, then $S''$ is a supersequence of $\{B_j|j\in[0:36n^2]\}$. Let us state the following two claims:

\begin{claim}\label{S' number of ones}
For each $i\in[1:n]$, $S'$ must contain a one between the $(6n(i-1)+3n)$th block of $0202$ and the $(6in)$th block of $0202$. Consequently, $S'$ contains at least $n$ ones.
\end{claim}

\begin{proof}
Assume for contradiction that the claim does not hold for an $i\in[1:n]$. Then the one in $A_i$ is mapped on a one in $S$ which is after at least $6in$ blocks of $0202$. Since $S'$ contains only $6n^2+3n$ blocks of $0202$, the suffix $(0202)^{3n}$ of $A_i=(0202)^{6n(i-1)+3n}1(0202)^{6n^2+3n-6ni}(0202)^{3n}$ must be mapped onto $S''$. Consequently, $S''$ is a supersequence of $\{(0202)^{3n}B_j|j\in[0:36n^2]\}$, thus by Lemma \ref{SCS special}, $|S''|\geq 4\cdot 3n+144n^2+24n-3=144n^2+36n-3$. Since $|S'|\geq 4(6n^2+3n)$, we have $|S|\geq 24n^2+12n+144n^2+36n-3=168n^2+48n-3>168n^2+37n-3+k$, a contradiction.
\end{proof}

\begin{claim}\label{S' vertex cover}
For $l\in[1:m]$ and $e_l=v_iv_j$, $i<j$, $T_l$ cannot be embedded in $S$ if $S'$ contains a one neither between the $6n(i-1)$th block of $0202$ and the $(6n(i-1)+3n)$th block of $0202$, nor between the $6n(j-1)$th block of $0202$ and the $(6n(j-1)+3n)$th block of $0202$.
\end{claim}

\begin{proof}
Assume that the claim does not hold for an $l\in[1:m]$ with $e_l=v_iv_j$, $i<j$. The suffix $(0202)^{6n+36n^2-1}$ of $T_l$ must be mapped onto $S''$: indeed, let $P_l=(0202)^{6n(i-1)}1(0202)^{6n(j-i-1)+3n}1(0202)^{6n(n+1-j)}0$ be a prefix of $T_l$. The first one (resp. second one, last zero) of $P_l$ must be mapped to a one (resp. one, zero) of $S$, let $t_1$ (resp. $t_2$, $t_3$) be the number of blocks of $0202$ in $S$ before this one (resp. one, zero). The assumption implies $t_1\notin [6n(i-1):6n(i-1)+3n]$ and $t_2\notin [6n(j-1):6n(j-1)+3n]$. By definition of $P_l$, $t_1\geq 6n(i-1)$ thus, by assumption $t_1\geq 6n(i-1)+3n$. By definition of $P_l$ again, $t_2\geq t_1+6n(j-i-1)+3n\geq 6n(j-1)$. Consequently, $t_2\geq 6n(j-1)+3n$. Finally, $t_3\geq t_2+6n(n+1-j)\geq 6n^2+3n$. Since $S'$ contains exactly $6n^2+3n$ blocks of $0202$, the last zero of $P_l$ is mapped onto $S''$.

Consequently, $S''$ must contain at least $6n+36n^2-1$ blocks of $0202$. Assume $S''$ contains $36n^2+p$ blocks of $0202$ with $p\geq 6n-1$. By Lemma \ref{SCS special} (ii), $|S''|\geq 4(36n^2+p)+\lceil(36n^2+1)/(p+1)\rceil\geq f(p)$ where $f$ is the function defined on $\mathbb{R}$ by $f(x)=4(36n^2+x)+(36n^2+1)/(x+1)$. But $f$ is increasing on $[3n:+\infty [$. Since $p\geq 6n-1$, $f(p)\geq f(6n-1)=4(36n^2+6n-1)+(36n^2+1)/(6n)>144n^2+24n-4+6n$. Thus $|S''|\geq 144n^2+30n-3$. 

Since $S'$ contains $6n^2+3n$ blocks of $0202$ and, as a consequence of Claim \ref{S' number of ones}, at least $n$ ones, we have $|S'|\geq 4(6n^2+3n)+n=24n^2+13n$. Consequently, $|S|\geq 144n^2+30n-3+24n^2+13n=168n^2+43n-3>168n^2+37n-3+k$, a contradiction.
\end{proof}

\emph{Conclusion} By Lemma \ref{SCS special}, $|S''|\geq 144n^2+24n-3$. By definition, $S'$ contains $6n^2+3n$ blocks of $0202$. Since $|S|\leq 168n^2+37n-3+k$, $S'$ can contain at most $168n^2+37n-3+k-(144n^2+24n-3)-4(6n^2+3n)=n+k$ ones. By Claim \ref{S' number of ones}, there is a one between the $(6n(i-1)+3n)$th zero and the $6in$th zero of $S'$ for each $i\in[1:n]$, which makes $n$ ones. This implies that there can be at most $k$ indices $i\in[1:n]$ such that there is a one between the $6n(i-1)$th zero and the $(6n(i-1)+3n)$th zero of $S'$. Let $i_1, \ldots i_p$ be these indices, $p\leq k$. Thanks to Claim \ref{S' vertex cover}, we see that $\{v_{i_1}, \ldots, v_{i_k}\}$ is a vertex cover of $G$ of size $p\leq k$.

\end{proof}

As explained in the presentation of the two variants, inputs for {\tt $\varphi$SCS} have no identical consecutive letters and do not start with~2, thus we immediatly gain the following corollary.

\begin{coro}
{\tt MSCS} is NP-complete when $|\Sigma| \geq 3$.
\end{coro}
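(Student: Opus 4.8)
The plan is to obtain the Corollary with essentially no new work by exhibiting {\tt MSCS} with $|\Sigma|\geq 3$ as a generalization of {\tt $\varphi$SCS} and transferring the hardness proved in the Theorem. First I would dispatch membership: {\tt MSCS} is in NP, since a candidate supersequence $S$ of length $<k$ is a polynomially bounded certificate and the greedy left-to-right test decides in polynomial time whether $S$ is a supersequence of each $w_i$. It therefore suffices to prove NP-hardness, for which I would use the most direct possible many-one reduction from {\tt $\varphi$SCS}. Given a {\tt $\varphi$SCS} instance $(L,k)$ over $\Sigma_3=\{0,1,2\}$, I relabel the alphabet so that the distinguished letter $a_1$ of {\tt MSCS} plays the role of $2$, and output the same pair $(L,k)$ read as an {\tt MSCS} instance. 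This map is computable in linear time and is answer-preserving, because after the identification of inputs the two problems ask the identical question, namely whether $L$ has a supersequence of length $<k$.

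The only point that genuinely requires checking, and hence the real content of the argument, is that every {\tt $\varphi$SCS} input already meets the two syntactic restrictions of {\tt MSCS}. Each $w_i\in L$ lies in $\varphi(\Sigma^*)$ and is thus a concatenation of blocks $0202$ and blocks $1$ in which no two ones are adjacent. Inside a block $0202$ consecutive letters alternate; a block $1$ is a single letter; and at every block junction one of the letters is a $0$ or a $2$, so the only way two identical letters could become consecutive would be a junction $1\,\vert\,1$, which is excluded precisely by the non-consecutive-ones condition. Hence no $w_i$ contains two identical consecutive letters. Moreover every $w_i$ begins with the first symbol of a block, which is $0$ or $1$ but never $2=a_1$, so the requirement that no word start with $a_1$ is satisfied. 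Therefore $(L,k)$ is a legal three-letter {\tt MSCS} instance, and the Theorem yields NP-hardness of {\tt MSCS} for $|\Sigma|=3$.

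To cover all $|\Sigma|\geq 3$ I would note that a three-letter {\tt MSCS} instance over $\{a_1,a_2,a_3\}$ is a fortiori an instance over any $\Sigma$ with $|\Sigma|\geq 3$: the surplus letters $a_4,\ldots,a_d$ simply do not occur in any $w_i$, the ``no consecutive identical letters'' and ``does not start with $a_1$'' conditions are unaffected, and a shortest supersequence never uses a letter absent from all inputs (deleting such a letter preserves the supersequence property and strictly shortens $S$). Consequently the optimal length is the same in both settings and the three-letter hardness lifts verbatim. I do not anticipate a substantive obstacle here; the entire force of the Corollary resides in the bookkeeping of the second paragraph, that is, in verifying that the $\varphi$-structured words automatically obey MSCS's two structural constraints, after which the Theorem does all the remaining work.
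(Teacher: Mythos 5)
Your proposal is correct and matches the paper's argument: the paper likewise obtains the corollary immediately by observing that every {\tt $\varphi$SCS} input consists of blocks $0202$ and $1$ with no consecutive ones (hence no identical consecutive letters), never starts with $2$, and relabelling so that $a_1=2$ makes {\tt $\varphi$SCS} a special case of {\tt MSCS}. Your additional bookkeeping (NP membership and padding the alphabet to handle all $|\Sigma|\geq 3$) is routine detail the paper leaves implicit, not a different route.
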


\bibliographystyle{plain}
\bibliography{FLOODS}

\end{document}